\def\endthebibliography{%
  \def\@noitemerr{\@latex@warning{Empty `thebibliography' environment}}%
  \endlist
}
\def\BibTeX{{\rm B\kern-.05em{\sc i\kern-.025em b}\kern-.08em
    T\kern-.1667em\lower.7ex\hbox{E}\kern-.125emX}}
\newtheorem{lemma}{Lemma}
\algnewcommand{\Inputs}[1]{%
  \State \textbf{Inputs:}
  \Statex \hspace*{\algorithmicindent}\parbox[t]{.8\linewidth}{\raggedright #1}
}
\algnewcommand{\Initialize}[1]{%
  \State \textbf{Initialize:}
  \Statex \hspace*{\algorithmicindent}\parbox[t]{.8\linewidth}{\raggedright #1}
}
\def\BState{\State\hskip-\ALG@thistlm}
\title{Capacity-Approaching Autoencoders for Communications}
\author{\IEEEauthorblockN{Nunzio A. Letizia, \textit{Student Member, IEEE}\thanks{The authors are with the University of Klagenfurt - Chair of Embedded Communication Systems, 9020 Klagenfurt, Austria. (e-mail: \{nunzio.letizia, andrea.tonello\}@aau.at)} and Andrea M. Tonello, \textit{Senior Member, IEEE}} 
}
\begin{document}
\maketitle
\thispagestyle{plain}

\begin{abstract}
The autoencoder concept has fostered the reinterpretation and the design of modern communication systems. It consists of an encoder, a channel, and a decoder block which modify their internal neural structure in an end-to-end learning fashion. However, the current approach to train an autoencoder relies on the use of the cross-entropy loss function. This approach can be prone to overfitting issues and often fails to learn an optimal system and signal representation (code). In addition, less is known about the autoencoder ability to design channel capacity-approaching codes, i.e., codes that maximize the input-output information under a certain power constraint. The task being even more formidable for an unknown channel for which the capacity is unknown and therefore it has to be learnt. 

In this paper, we address the challenge of designing capacity-approaching codes by incorporating the presence of the communication channel into a novel loss function for the autoencoder training. In particular, we exploit the mutual information between the transmitted and received signals as a regularization term in the cross-entropy loss function, with the aim of controlling the amount of information stored. By jointly maximizing the mutual information and minimizing the cross-entropy, we propose a methodology that a) computes an estimate of the channel capacity and b) constructs an optimal coded signal approaching it. 
Several simulation results offer evidence of the potentiality of the proposed method.
\end{abstract}

\begin{IEEEkeywords} 
Digital communications, Physical layer, Statistical learning, Autoencoders, Coding theory, Mutual information, Channel Capacity, Explainable Machine Learning.
\end{IEEEkeywords}

\maketitle
\section{Introduction}
\label{sec:introduction}
Communication systems have reached a high degree of performance, meeting demanding requirements in numerous application fields due to the ability to cope with real-world effects exploiting various accomplished physical system models. Reliable transmission in a communication medium has been investigated in the milestone work of C. Shannon \cite{Shannon1948} who suggested to represent the communication system as a chain of fundamental blocks, i.d., the transmitter, the channel and the receiver. Each block is mathematically modelled in a bottom-up fashion, so that the full system results mathematically tractable. 

On the contrary, machine learning (ML) algorithms take advantage of the ability to work with and develop top-down models. In particular, deep learning (DL) has recently experienced a strong growth thanks to a larger availability of labeled data and increased computational power of processing units. Several fields have borrowed techniques from ML resulting in significant contributions and research progress. However, only in recent years, researchers have started adopting ML tools in the communication domain with promising results \cite{OsheaIntro,ML_PLC,Dorner2018}. In particular, the communication chain has been reinterpreted as an autoencoder-based system \cite{OsheaIntro}, a deep neural network (NN) which takes as input a sequence of bits $s$, produces a coded signal, feeds it into a channel layer and tries to reconstruct the initial sequence from the channel output samples. The intermediate channel layer implicitly separates a prior neural block, the encoder, from the posterior one, the decoder. The encoder maps bits into symbols to be transmitted, ideally placing them into a coded signal that changes during the training process in order to mitigate the effects of the channel and noise. The decoder, instead, performs a classification task and predicts the input message $\hat{s}$ from the received signal samples.
The autoencoder can be trained end-to-end such that the block-error rate (BLER) of the full system is minimized. 
This idea pioneered a number of related works aimed at showing the potentiality of deep learning methods applied to wireless communications \cite{Dorner2018,TurboAE,Hoydis2019,Wunder2019}. In \cite{Dorner2018}, communication over-the-air has been proved possible without the need of any conventional signal processing block, achieving competitive bit error rates w.r.t. to classical approaches. Turbo autoencoders \cite{TurboAE} reached state-of-the-art performance with capacity-approaching codes at a low signal to noise ratio (SNR). These methods rely on the a-priori channel knowledge (most of the time a Gaussian noise intermediate layer is assumed) and they fail to scale when the channel is unknown. To model the intermediate layers representing the channel, one approach is to use generative adversarial networks (GANs) \cite{Goodfellow2014}. GANs are a pair of networks in competition with each other: a generator model $G$ that captures the data distribution and a discriminator model $D$ that distinguishes if a sample is a true sample coming from real channel samples rather than a fake one coming from samples generated by $G$. In this way, the generator implicitly learns the channel distribution $p_Y(y|x)$, resulting in a differentiable network which can be jointly trained in the autoencoder model \cite{OsheaGAN,RighiniLetizia2019,Letizia2019a}. 

None of the aforementioned methods explicitly considered the information rate in the cost function. In this direction, the work in \cite{Hoydis2019} included the information rate in the formulation and leveraged autoencoders to jointly perform geometric and probabilistic signal constellation shaping. If the channel model is not available, the encoder can be independently trained to learn and maximize the mutual information between the input and output channel samples, as presented in \cite{Wunder2019}. However, the decoder is independently designed from the encoder and it does not necessarily grant error-free decoding. Therefore, the encoder and decoder learning process shall be done \textit{jointly}. In addition, the cost function used to train the autoencoder shall be appropriately chosen. With this goal in mind, let us firstly look into the historical developments and progresses made in the ML field.

The autoencoder was firstly introduced as a non-linear principle component analysis method, exploiting neural networks \cite{Kramer1991}. Indeed, the original network contained an internal bottleneck layer which forced the autoencoder to develop a compact and efficient representation of the input data, in an unsupervised manner.
Several extensions have been further investigated, such as the denoising autoencoder (DAE) \cite{Vincent2008}, trained to reconstruct corrupted input data, the contractive autoencoder (CAE) \cite{Rifai2011}, which attempts to find a simple encoding and decoding function, and the $k$-sparse autoencoder \cite{Makhzani2013} in which only $k$ intermediate nodes are kept active.
Autoencoders find application also in generative models as described in \cite{Kingma2013}. However, all of them are particular forms of \textit{regularized} autoencoders. Regularization is often introduced as a penalty term in the cost function and it discourages complex and extremely detailed models that would poorly generalize on unseen data. In this context, the information bottleneck Lagrangian \cite{Tishby1999} was used as a regularization term to study the sufficiency (fidelity) and minimality (complexity) of the internal representation \cite{Alemi2017,Soatto2018}. So far, in the context of communication systems design, regularization in the loss function has not been introduced yet. In addition, the decoding task is usually performed as a classification task. In ML applications, classification is usually carried out by exploiting a final softmax layer together with the categorical cross-entropy loss function. The softmax layer provides a probabilistic interpretation of each possible bits sequence so that the cross-entropy measures the dissimilarity between the reference and the predicted sequence of bits distribution, $p(s)$ and $q(\hat{s})$, respectively.
Nevertheless, training a classifier via cross-entropy suffers from the following problems: firstly, it does not guarantee any optimal latent representation. Secondly, it is prone to overfitting issues, especially in the case of large networks. 
Lastly, in the particular case of autoencoders for communications, the fundamental trade-off between the rate of transmission and reliability, namely, the channel capacity $C$, is not explicitly considered in the learning phase. 

These observations made us rethinking the problem by formulating the two following questions:
\begin{itemize}
\item[a)] Given a power constraint, is it possible to design capacity-approaching codes exploiting the principle of autoencoders?
\item[b)] Given a power constraint, is it possible to estimate channel capacity with the use of an autoencoder?
\end{itemize}
The two questions are inter-related and the answer of the first one provides an answer to the second one in a constructive way, since if such a code is obtained, then the distribution of the input signal that maximizes the mutual information is also determined, and consequentially the channel capacity can also be obtained.

Inspired by the information bottleneck method \cite{Tishby1999} and by the notion of channel capacity, a novel loss function for autoencoders in communications is proposed in this paper. The amount of information stored in the latent representation is controlled by a regularization term estimated using the recently introduced mutual information neural estimator (MINE) \cite{Mine2018}, enabling the design of nearly optimal codes. To the best of our knowledge, it is the first time that the influence of the channel appears in the end-to-end learning phase in terms of mutual information. More specifically, the contributions of the paper are the following:
\begin{itemize}
\item A new loss function is proposed. It enables a new signal constellation shaping method.
\item Channel coding is obtained by \textit{jointly} minimizing the cross-entropy between the input and decoded message, and maximizing the mutual information between the transmitted and received signals. 
\item A regularization term $\beta$ controls the amount of information stored in the symbols for a fixed message alphabet dimension $M$ and a fixed rate $R<C$, playing as a trade-off parameter between error-free decoding ability and maximal information transfer via coding. The NN architecture is referred to as \textit{rate-approaching autoencoder}.
\item By including the mutual information, we propose a new theoretical iterative scheme to built capacity-approaching codes of length $n$ and rate $R$ and consequently estimate channel capacity. This yields a scheme referred to as \textit{capacity-approaching autoencoder}.
\item With the notion of explainable ML in mind, the rationale for the proposed metric and methodology is discussed in more fundamental terms following the information bottleneck method \cite{Tishby1999} and by discussing the cross-entropy decomposition.  
\end{itemize} 

The remainder of the paper is organized as follows: In Sec.\ref{sec:theory}, we briefly review the autoencoder principle and starting from the channel capacity concept, we intuitively motivate the presence of the mutual information in a new loss function. Sec.\ref{sec:IB} discusses the mathematical foundation behind the new regularization term. In Sec.\ref{sec:capacity}, we exploit the mutual information block previously introduced to design capacity-approaching codes and to learn the channel capacity. Sec.\ref{sec:results} validates the proposed methodology through numerical results. Finally, Sec.\ref{sec:conclusions} reports the conclusions.



\subsection*{\textbf{Notation}}
\label{subsec:notation}
$X$ denotes a multivariate random variable of dimension $d$, while $x\in \mathcal{X}$ denotes its realization. Vectors $\mathbf{y}$ and matrices  $\mathbf{Y}$ are represented using lower case bold and upper case bold letters, respectively. $p_Y(y|x)$ and $p_{XY}(x,y)$ represent the conditional and joint probability density functions, while $p_X(x)p_Y(y)$ is the product of the two marginals. $I(X;Y)$ denotes the mutual information between the random variables $X$ and $Y$.

\section{Rate-Approaching Autoencoder}
\label{sec:theory}
In this section, we introduce an autoencoder architecture to design a coding scheme that reaches a certain rate under a certain power constraint and code length. Then, we present the motivations behind the need of a new design metric that accounts for the mutual information in the classical cross-entropy loss function. 

\subsection{End-to-end Autoencoder-based Communications}
\label{subsec:end-to-end}
The communication chain can be divided into three fundamental blocks: the transmitter, the channel, and the receiver. The transmitter attempts to communicate a message $s\in \mathcal{M} = \{1,2,\dots,M\}$. To do so, it transmits $n$ complex baseband symbols $\mathbf{x}\in \mathbb{C}^{n}$ at a rate $R=(\log_2 M)/n$ (bits per channel use) over the channel, under a power constraint. In general, the channel modifies $\mathbf{x}$ into a distorted and noisy version $\mathbf{y}$. The receiver takes as input $\mathbf{y}$ and produces an estimate $\hat{s}$ of the original message $s$. From an analytic point of view, the transmitter applies a transformation $f:\mathcal{M} \to \mathbb{C}^{n}$, $\mathbf{x}=f(s)$ where $f$ is referred to as the \textit{encoder}. The channel is described in probabilistic terms by the conditional transition probability density function $p_Y(y|x)$. The receiver, instead, applies an inverse transformation $g: \mathbb{C}^{n} \to \mathcal{M}$, $\hat{s}= g(\mathbf{y})$ where $g$ is referred to as the \textit{decoder}. Such  communication scheme can be interpreted as an autoencoder which learns internal robust representations $\mathbf{x}$ of the messages $s$ in order to reconstruct $s$ from the perturbed channel samples $\mathbf{y}$ \cite{OsheaIntro}. 

The autoencoder is a deep NN trained end-to-end using stochastic gradient descent (SGD). The encoder block $f(s;\theta_E)$ maps $s$ into $\mathbf{x}$ and consists of an embedding layer followed by a feedforward NN and a normalization layer to fulfill a given power constraint. The channel is identified with a set of layers; a canonical example is the AWGN channel, a Gaussian noise layer which generates $y_i = x_i+w_i$ with $w_i\sim \mathcal{CN}(0,\sigma^2), i=1,\dots,n$. The decoder block $g(\mathbf{y};\theta_D)$ maps the received channel samples $\mathbf{y}$ into the estimate $\hat{s}$ by building the empirical probability density function $p_{\hat{S}|Y}(\hat{s}|y;\theta_D)$. It consists of a feedforward NN followed by a softmax layer which outputs a probability vector of dimension $M$ that assigns a probability to each of the possible $M$ messages. The encoder and decoder parameters $(\theta_E, \theta_D)$ are jointly optimized during the training process with the objective to minimize the cross-entropy loss function
\begin{equation}
\label{eq:CE}
\mathcal{L}(\theta_E, \theta_D) = \mathbb{E}_{(x,y)\sim p_{XY}(x,y)}[-\log(g(\mathbf{y};\theta_D))],
\end{equation}

whereas the performance of the autoencoder-based system is typically measured in terms of bit error rate (BER) or block error rate (BLER)
\begin{equation}
\label{eq:BLER}
P_e =  {P[\hat{s}\neq s]}.
\end{equation}

The autoencoder ability to learn joint coding and modulation schemes \cite{OsheaIntro,Dorner2018} for any type of channel (even for those without a known model) and for any type of non-linear effects (e.g. amplifiers and clipping) \cite{OFDM_AE} demonstrates the potentiality and flexibility of the approach.

However, the cross-entropy loss function does not guarantee any optimality in the code design and it is often prone to overfitting issues \cite{Tishby2017, Soatto2018}. In addition and most importantly, optimal system performance is measured in terms of achievable rates, thus, in terms of mutual information $I(X;Y)$ between the transmitted $\mathbf{x}$ and the received signals $\mathbf{y}$, defined as
\begin{equation}
\label{eq:mutual_information}
I(X;Y) = \mathbb{E}_{(x,y)\sim p_{XY}(x,y)}\biggl[\log\frac{p_{XY}(x,y)}{p_X(x)p_Y(y)}\biggr].
\end{equation}

In communications, the trade-off between the rate of transmission and reliability is expressed in terms of channel capacity. For a memory-less channel, the capacity is defined as
\begin{equation}
\label{eq:capacity}
C = \max_{p_X(x)} I(X;Y),
\end{equation}
where $p_X(x)$ is the input signal probability density function. Finding the channel capacity $C$ is at least as complicated as evaluating the mutual information. As a direct consequence, building capacity-approaching codes results in a formidable task. 

Given a certain power constraint and rate $R$, the autoencoder-based system, that is trained to minimize the cross-entropy loss function, is able to automatically build zero-error codes. Nevertheless, there may exist a higher rate error-free code. Therefore, the autoencoder does not provide a capacity-achieving code. In other words, conventional autoencoding approaches, through cross-entropy minimization, allow to obtain excellent decoding schemes. Nevertheless, no guarantee to find an optimal encoding scheme is given, especially in deep NNs where problems such as vanishing and exploding gradients occur \cite{279181}. 
Hence, the starting point to design capacity-approaching codes is to redefine the loss function used by the autoencoder.
In detail, we propose to include the mutual information quantity as a regularization term. The proposed loss function reads as follows

\begin{equation}
\label{eq:CE_MI}
\hat{\mathcal{L}}(\theta_E, \theta_D) = \mathbb{E}_{(x,y)\sim p_{XY}(x,y)}[-\log(g(y;\theta_D))]-\beta I(X;Y).
\end{equation}

The loss function in \eqref{eq:CE_MI} forces the autoencoder to jointly modify the network parameters $(\theta_E,\theta_D)$. The decoder reconstructs the original message $s$ with lowest possible error probability $P_e$, while the encoder finds the optimal input signal distribution $p_X(x)$ which maximizes $I(X;Y)$, for a given rate $R$ and code length $n$ and for a certain power constraint. We will denote such type of trained autoencoder as \textit{rate-approaching} autoencoder.
It should be noted that such a NN architecture does not necessarily provide an optimal code capacity-wise, since we set a target rate which does not correspond to channel capacity. To solve this second objective, in Sec.\ref{sec:capacity} we will describe a methodology leading to a new scheme that we name \textit{capacity-approaching} autoencoder.

To compute the mutual information $I(X;Y)$, we can exploit recent results such as MINE \cite{Mine2018}, as discussed below.

\subsection{Mutual Information Estimation}
\label{subsec:mutual_information_estimation}
The mutual information between two random variables, $X$ and $Y$, is a fundamental quantity in statistics and information theory. It measures the amount of information obtained about $X$ by observing $Y$. The difficulty in computing $I(X;Y)$ resides in its dependence on the joint probability density function $p_{XY}(x,y)$, which is usually unknown. Common approaches to estimate the mutual information rely on binning, density and kernel estimation \cite{Moon1995}, $k$-nearest neighbours \cite{Kraskov2004}, $f$-divergence functionals \cite{Nguyen2010}, and variational lower bounds. 

Recently, the MINE estimator \cite{Mine2018} proposed a NN-based method to estimate the mutual information. MINE is based on the Donsker-Varadhan dual representation \cite{Donsker1983} of the Kullback-Leibler divergence, in particular 
\begin{equation}
D_{\text{KL}}(p||q) = \sup_{T:\Omega \to \mathbb{R}} \mathbb{E}_{x\sim p(x)}[T(x)]-\log(\mathbb{E}_{x\sim q(x)}[e^{T(x)}]),
\end{equation}
where the supremum is taken over all functions $T$ such that the expectations are finite. Indeed, by parametrizing a family of functions $T_{\theta} : \mathcal{X}\times \mathcal{Y} \to \mathbb{R}$ with a deep NN with parameters $\theta \in \Theta$, the following bound \cite{Mine2018} holds
\begin{equation}
\label{eq:lower_bound}
I(X;Y)\geq I_\theta(X;Y),
\end{equation}
where $I_{\theta}(X;Y)$ is the neural information measure \cite{Mine2018} defined as
\begin{align}
\label{eq:MINE}
I_\theta(X;Y) = \sup_{\theta \in \Theta}& \; \mathbb{E}_{(x,y)\sim p_{XY}(x,y)}[T_{\theta}(x,y)] \nonumber \\
& -\log(\mathbb{E}_{(x,y)\sim p_X(x) p_Y(y)}[e^{T_{\theta}(x,y)}]).
\end{align}
The neural information $I_\theta(X;Y)$ in \eqref{eq:MINE} can be maximized using back-propagation and gradient ascent, leading to a tighter bound in \eqref{eq:lower_bound}. The consistency property of MINE guarantees the convergence of the estimator to the true mutual information value.

Estimating the mutual information $I(X;Y)$ is not enough to build capacity-approaching codes for a generic channel. A maximization over all possible input distribution $p_X(x)$ is also required. Therefore, to learn an optimal scheme, at each iteration the encoder needs both the cross-entropy gradient for the decoding phase and the mutual information gradient, from MINE, for the optimal input signal distribution. 
The proposed loss function in \eqref{eq:CE_MI} (see also Fig.\ref{fig:CE_MI}) shows such double role. In this way, the autoencoder trained with the new loss function intrinsically designs codes for which the mutual information $I(X;Y)$ is known and maximal by construction, under the aforementioned constraints of power, rate $R$ and code-length $n$.

\begin{figure}
	\centering
	\includegraphics[scale=0.42]{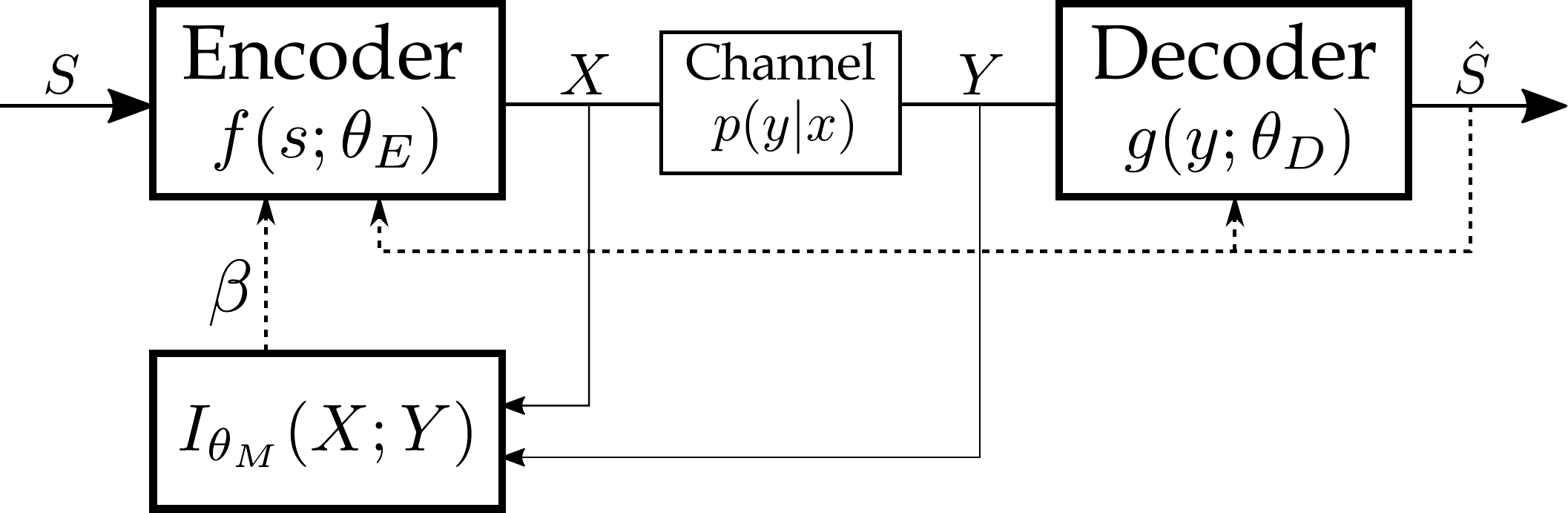}
	\caption{Rate-approaching autoencoder with the mutual information estimator block. The channel samples and the decoded message are used during the training process. The former allows to built an optimal encoding scheme exploiting the mutual information block, the latter, instead, allows to measure the decoding error through the cross-entropy loss function.}
	\label{fig:CE_MI}
\end{figure} 

The rationale behind the proposed method is formally discussed in the next section.

\section{Information Bottleneck Method and Cross-Entropy Decomposition}
\label{sec:IB}
Autoencoders are usually trained by minimizing the cross-entropy loss function \cite{OsheaIntro}. Nevertheless, cross-entropy can be minimized even for random labels as shown in \cite{Zhang2017}, leading to several overfitting issues. The work in \cite{Soatto2018} proved how a deep NN can just memorize the dataset (in its weights) to minimize the cross-entropy, yielding to poor generalization. Hence, the authors proposed an information bottleneck (IB) regularization term to prevent overfitting, similarly to the IB Lagrangian, originally presented in \cite{Tishby1999}. Indeed, the IB method optimally compresses the input random variable by eliminating the irrelevant features which do not contribute to the prediction of the output random variable. 

From an autoencoder-based communication systems point of view, let $S \rightarrow X \rightarrow Y$ be a prediction Markov chain, where $S$ represents the message to be sent, $X$ the compressed symbols and $Y$ the received symbols. The IB method solves
\begin{equation}
\label{eq:IB}
\mathcal{L}(p(x|s)) = I(S;X)-\beta I(X;Y),
\end{equation}
where the positive Lagrange multiplier $\beta$ plays as a trade-off parameter between the complexity of the encoding scheme (rate) and the amount of relevant information preserved in it. 

The communication chain adds an extra Markov chain constraint $Y\rightarrow \hat{S}$, where $\hat{S}$ represents the decoded message. Therefore, in order to deal with the full autoencoder chain, we can substitute the first term of the RHS in \eqref{eq:IB} with the cross-entropy loss function, as presented in \eqref{eq:CE_MI}. However, the Lagrange multiplier (or regularization parameter in ML terms) operates now as a trade-off parameter between the complexity to reconstruct the original message and the amount of information preserved in its compressed version.

To further motivate the choice for the new loss function with the mutual information as the regularization term, let us consider the following decomposition of the cross-entropy loss function.

\begin{lemma}\emph{(See \cite{Hoydis2019})}
\label{lemma:Lemma1}
Let $s \in \mathcal{M}$ be the transmitted message and let $(x,y)$ be samples drawn from the joint distribution $p_{XY}(x,y)$. If $x = f(s;\theta_E)$ is an invertible function representing the encoder and if $p_{\hat{S}|Y}(\hat{s}|y;\theta_D) = g(y;\theta_D) $ is the decoder block, then the cross-entropy function $\mathcal{L}(\theta_E,\theta_D)$ admits the following decomposition
\begin{align}
\mathcal{L}(\theta_E,\theta_D) &= H(S)-I_{\theta_E}(X;Y)+ \nonumber \\ 
&+\mathbb{E}_{y\sim p_Y(y)}[D_{\text{KL}}(p_{X|Y}(x|y)||p_{\hat{S}|Y}(x|y;\theta_D))].
\end{align}
\end{lemma}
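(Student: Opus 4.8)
The plan is to recognize the claimed identity as the standard ``cross-entropy $=$ conditional entropy $+$ posterior KL divergence'' decomposition, specialized to the autoencoder chain, and then to rewrite the conditional entropy as $H(S)-I_{\theta_E}(X;Y)$ using the invertibility of the encoder. Everything is exact (no estimation is involved here), so the argument is a chain of identities.

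First I would unpack the loss in \eqref{eq:CE}. Since $g(\mathbf{y};\theta_D)$ is the length-$M$ vector $p_{\hat{S}|Y}(\cdot|y;\theta_D)$ and the argument of the logarithm is its entry at the transmitted message $s$, and since $x=f(s;\theta_E)$ is a deterministic bijection of the finite set $\mathcal{M}$ onto the support of $X$, the loss can be written as $\mathcal{L}(\theta_E,\theta_D)=\mathbb{E}_{(x,y)\sim p_{XY}}[-\log p_{\hat{S}|Y}(x|y;\theta_D)]$, where $p_{\hat{S}|Y}(\cdot|y;\theta_D)$ is now read as a probability mass function on the $M$ points in the support of $X$ via the identification $\hat{s}\leftrightarrow f(\hat{s};\theta_E)$.

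Next I would insert $\pm\log p_{X|Y}(x|y)$ inside the expectation and split it into two pieces,
\begin{align*}
\mathcal{L}(\theta_E,\theta_D) &= \mathbb{E}_{(x,y)\sim p_{XY}}\bigl[-\log p_{X|Y}(x|y)\bigr] \\
&\quad + \mathbb{E}_{y\sim p_Y}\bigl[\mathbb{E}_{x\sim p_{X|Y}(\cdot|y)}\bigl[\log p_{X|Y}(x|y)-\log p_{\hat{S}|Y}(x|y;\theta_D)\bigr]\bigr].
\end{align*}
The first term is by definition the conditional entropy $H(X|Y)$, and the second term is by definition $\mathbb{E}_{y\sim p_Y}[D_{\text{KL}}(p_{X|Y}(\cdot|y)\,\|\,p_{\hat{S}|Y}(\cdot|y;\theta_D))]$, i.e.\ the last term in the statement. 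To finish, I would use $H(X|Y)=H(X)-I(X;Y)$ and then $H(X)=H(S)$, which holds because $f(\cdot;\theta_E)$ is an invertible deterministic map and (discrete) entropy is invariant under a relabeling of the alphabet; writing the resulting $I(X;Y)$ as $I_{\theta_E}(X;Y)$ to record its dependence on the encoder parameters, this gives $H(X|Y)=H(S)-I_{\theta_E}(X;Y)$, and substituting back yields the claimed decomposition.

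The computation itself is routine; the only real care needed is the bookkeeping. The main (mild) obstacle is justifying the moves around the change of variables $s\leftrightarrow x$ and the conditioning — specifically, reading $g(\mathbf{y};\theta_D)$ as a pmf in $x$ evaluated at the \emph{true} symbol, and noting that, because $X$ has finite support, all entropies and KL divergences here are discrete, so $H(X)=H(S)$ holds without any differential-entropy (Jacobian) correction term.
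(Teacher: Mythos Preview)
Your proposal is correct and follows essentially the same add-and-subtract-a-log strategy as the paper's proof; the only cosmetic difference is the grouping. You insert $\pm\log p_{X|Y}(x|y)$ to peel off $H(X|Y)$ and the KL term, and then split $H(X|Y)=H(S)-I_{\theta_E}(X;Y)$, whereas the paper first inserts $\pm\log p_X(x)$ to peel off $H(S)$ and then inserts $\pm\log\bigl(p_Y(y)/p_{XY}(x,y)\bigr)$ in the remainder to isolate $-I(X;Y)$ and the KL term --- the same identities, applied in a different order.
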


The cross-entropy decomposition in Lemma \ref{lemma:Lemma1} can be read in the following way: the first two terms are responsible for the conditional entropy of the received symbols. In the particular case of a uniform source, only the mutual information between the transmitted and received symbols is controlled by the encoding function during the training process. On the contrary, the last term measures the error in computing the divergence between the true posterior distribution and the decoder-approximated one. As discussed before, the network could minimize the cross-entropy just by minimizing the KL-divergence, concentrating itself only on the label information (decoding) rather than on the symbol distribution (coding). The parameter $\beta$ in \eqref{eq:CE_MI} helps in balancing these two different contributions
\begin{align}
\label{eq:CE_MI_decomp1}
\mathcal{L}(\theta_E,\theta_M, \theta_D) &= H(S)-I_{\theta_E}(X;Y)-\beta I_{\theta_E, \theta_M}(X;Y)+ \nonumber \\ 
&+\mathbb{E}_{y\sim p_Y(y)}[D_{\text{KL}}(p_{X|Y}(x|y)||p_{\hat{S}|Y}(x|y;\theta_D))].
\end{align}
Moreover, if the mutual information estimator is consistent, \eqref{eq:CE_MI_decomp1} is equal to
\begin{align}
\label{eq:CE_MI_decomp2}
\mathcal{L}(\theta_E, \theta_D) &= H(S)-(\beta+1)I_{\theta_E}(X;Y)+ \nonumber \\ 
&+\mathbb{E}_{y\sim p_Y(y)}[D_{\text{KL}}(p_{X|Y}(x|y)||p_{\hat{S}|Y}(x|y;\theta_D))].
\end{align}
It is immediate to notice that for $\beta<-1$, the network gets in conflict since it would try to minimize both the mutual information and the KL-divergence. Therefore, optimal values for $\beta$ lie on the semi-line $\beta>-1$.

\section{Capacity-Approaching Autoencoder}
\label{sec:capacity}
\begin{algorithm}
\caption{Capacity Learning with Capacity-Approaching Autoencoders}
\label{alg:1}
\begin{algorithmic}[1]
\Inputs{$L$ SNR increasing values, $\epsilon$ threshold.}
\Initialize{$R_0 = k_0/n_0$ initial rate, $i=0, j = 0$.}
\For{$l=1$ to $L$}
	\State Train $AE^{(0)}(k_0,n_0)$;
	\State Compute $I^{(0)}_{\theta_M}(X;Y)$;
	\While{$\Delta>\epsilon$}
		\State $k_{i+1} = (R_{i}\cdot n_j) +1$;
		\State $R_{i+1}= k_{i+1}/n_j$;
		\State Train $AE^{(i+1)}(k_{i+1},n_j)$;
		\If{$R_{i+1}$ is not achievable}
			\State $n_{j+1}=n_j+1$;
			\State $j= j +1$;
		\Else
			\State Compute $I^{(i+1)}_{\theta_M}(X;Y)$;
			\State Evaluate $\Delta = \biggl|\frac{I^{(i+1)}_{\theta_M}(X;Y)-I^{(i)}_{\theta_M}(X;Y)}{I^{(i)}_{\theta_M}(X;Y)}\biggr|$;
		\EndIf
		\State $i=i+1$;
	\EndWhile
	\State $C_l = I^{(i)}_{\theta_M}(X;Y)$ estimated capacity.
\EndFor
\end{algorithmic}
\end{algorithm}
Interestingly, the mutual information block can be exploited to obtain an estimate of the channel capacity. The autoencoder-based system is subject to a power constraint coming from the transmitter hardware and it generally works at a fixed rate $R$ and channel uses $n$. For $R$ and $n$ fixed, the scheme discussed in Fig.\ref{fig:CE_MI} optimally designs the coded signal distribution and provides an estimate $I_{\theta_M}(X;Y)$ of the mutual information $I(X;Y)$ which approaches $R$. However, a question remains still open whether the achieved rate with the designed code is actually the channel capacity. 

To find the channel capacity $C$ and determine the optimal signal distribution $p_X(x)$, a broader search on the coding rate needs to be conducted, relaxing both the constraints on $R$ and $n$. The flexibility on $R$ and $n$ requires to use different autoencoders. In the following, we denote with $AE(k,n)$ a rate-approaching autoencoder-based system that transmits $n$ complex symbols at a rate $R=k/n$, where $k=\log_2(M)$ and $M$ is the number of possible messages. The proposed methodology can be segmented in two phases: 
\begin{enumerate}
\item Training of a rate-approaching autoencoder $AE(k,n)$ for a fixed coding rate $R$ and channel uses $n$, enabled via the loss function in \eqref{eq:CE_MI};
\item Adaptation of the coding rate $R$ to build capacity approaching codes $\mathbf{x}\sim p_X(x)$ and consequently find the channel capacity $C$.
\end{enumerate}
We remark that the capacity $C$ is the maximum data rate $R$ that can be transmitted through the channel at an arbitrarily small error probability. Therefore, the proposed algorithm makes an initial guess rate $R_0$ and slowly increases it by playing on both $k$ and $n$. 

The basic idea is to iteratively train a pair of autoencoders $AE^{(i)}(k_i,n_j), AE^{(i+1)}(k_{i+1},n_j)$ and evaluate both the mutual informations $I^{i}_{\theta_M}(X;Y), I^{(i+1)}_{\theta_M}(X;Y)$, at a fixed power constraint. The first autoencoder works at a rate $R_i=k_i/n_j$, while the second one at $R_{i+1}=k_{i+1}/n_j$, with $R_{i+1}>R_i$. If the ratio 
\begin{equation}
\biggl|\frac{I^{(i+1)}_{\theta_M}(X;Y)-I^{(i)}_{\theta_M}(X;Y)}{I^{(i)}_{\theta_M}(X;Y)}\biggr|<\epsilon,
\end{equation}
where $\epsilon$ is an input positive parameter, the code is reaching the capacity limit for the fixed power. If the rate $R_{i+1}$ is not achievable (it does not exist a error-free decoding scheme), a longer code $n_{j+1}$ is required. The algorithm in Tab.\ref{alg:1} describes the pseudocode that implements the channel capacity estimation and capacity-approaching code using as a building block the rate-approaching autoencoder.

\subsection{Remarks}
\label{subsec:remarks}
The proposed capacity-approaching autoencoder offers a constructive learning methodology to design a coding scheme that approaches capacity and to know what such a capacity is, even for channels that are unknown or for which a closed form expression for capacity does not exist. Indeed, training involves numerical procedures which may introduce some challenges. Firstly, the autoencoder is a NN and it is well known that its performance depends on the training procedure, architecture design and hyper-parameters tuning. Secondly, the MINE block converges to the true mutual information mostly for a low number of samples. In practice, when $n$ is large, the estimation often produces wrong results, therefore, a further investigation on stable numerical estimators via NNs needs to be conducted. Lastly, the autoencoder fails to scale with high code dimension. Indeed, for large values of $n$, the network could get stuck in local minima or, in the worst scenario, could not provide zero-error codes. The proposed approach transcends such limitations, although they have to be taken into account in the implementation phase.
On the positive side, the work follows the direction of explainable machine learning, in which the learning process is motivated by an information-theoretic approach.
Possible improvements are in defining an even tighter bound in \eqref{eq:lower_bound} and in adopting different network structures (convolutional or recurrent NNs). 

It should be noted that the approach works also for non linear channels where optimal codes have to be designed under an average power constraint and not for a given operating SNR which is appropriate for linear channels with additive noise.  

\section{Numerical Results}
\label{sec:results}
In this section, we present several results obtained with the rate-approaching autoencoders. Numerical results demonstrate an improvement in the decoding schemes (measured in terms of BLER) and show the achieved rates with respect to capacity in channels for which a closed form capacity formulation is known, such as the AWGN channel, and unknown, such as additive uniform noise and Rayleigh fading ones.

The following schemes consider an average power constraint at the transmitter side $\mathbb{E}[|\mathbf{x}|^2] = 1 $, implemented through a batch-normalization layer. Training of the end-to-end autoencoder is performed w.r.t. to the loss function in \eqref{eq:CE_MI}, implemented via a double minimization process since also the MINE block needs to be trained
\begin{equation}
\label{eq:CE_MI_final}
\min_{\theta_E,\theta_D} \min_{\theta_M} \mathbb{E}_{(x,y)\sim p_{XY}(x,y)}[-\log(g(y;\theta_D))]-\beta I_{\theta_M}(X;Y).
\end{equation}

Furthermore, the training procedure was conducted with the same number of iterations for different values of the regularization parameter $\beta$.
We used Keras with TensorFlow \cite{TensorFlow} as backend to implement the proposed rate-approaching autoencoder. The code has been tested on a Windows-based operating system provided with Python 3.6, TensorFlow 1.13.1, Intel core i7-3820 CPU. The implementation is not straightforward, thus, to allow reproducible results and for clarity, the code will be rendered publicly available. \footnote{\url{https://github.com/tonellolab/capacity-approaching-autoencoders}}

\subsection{Coding-decoding capability}
\begin{figure}
	\centering
	\includegraphics[scale=0.34]{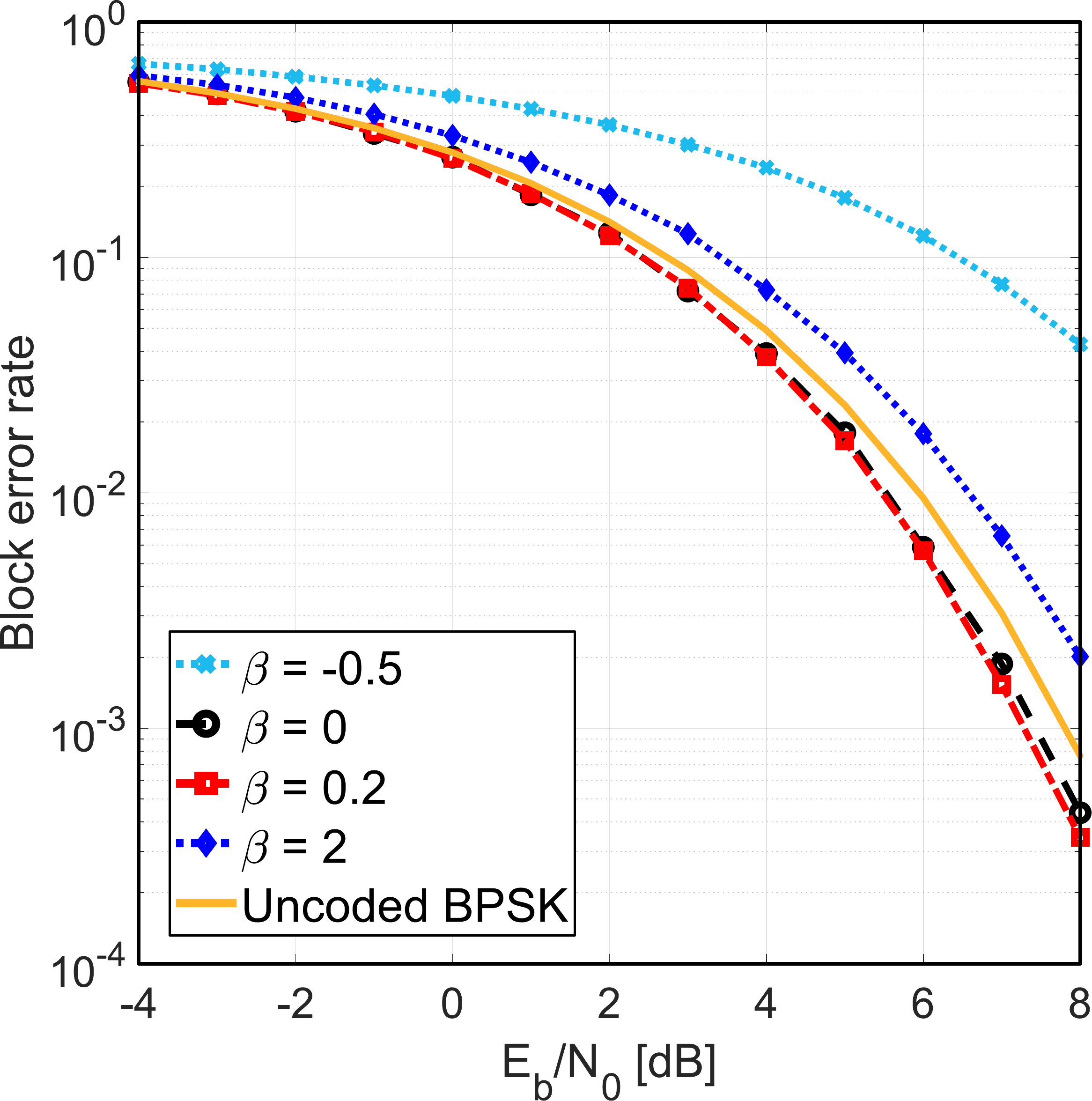}
	\caption{BLER of the rate-approaching autoencoder AE($4,2$) for different values of the regularization parameter $\beta$ for an average power constraint, $k=4$ and $n=2$.}
	\label{fig:Ber_4_4}
\end{figure}

As first experiment, we consider a rate-approaching autoencoder AE($4,2$) with rate $R=2$. To demonstrate the effective influence on the performance of the mutual information term controlled by $\beta$ in \eqref{eq:CE_MI_final}, we investigate $4$ different representative values of the regularization parameter. Fig.\ref{fig:Ber_4_4} illustrates the obtained BLER after the same number of training iterations. We notice that the lowest BLER is achieved for $\beta=0.2$, therefore as expected, the mutual information contributes in finding better encoding schemes. Despite the small gain, the result highlights that better BLER can be obtained using the same number of iterations. As shown in \eqref{eq:CE_MI_decomp2}, negative values of $\beta$ tend to force the network to just memorize the dataset, while large positive values create an unbalance. We remark that $\beta=0$ corresponds to the classic autoencoder approach proposed in \cite{OsheaIntro}. To study optimal values of $\beta$, a possible approach could try to find the value of $\beta$ for which the two gradients (cross-entropy and mutual information) are equal in magnitude. In the following, we assume $\beta=0.2$.

\begin{figure}
	\centering
	\includegraphics[scale=0.35]{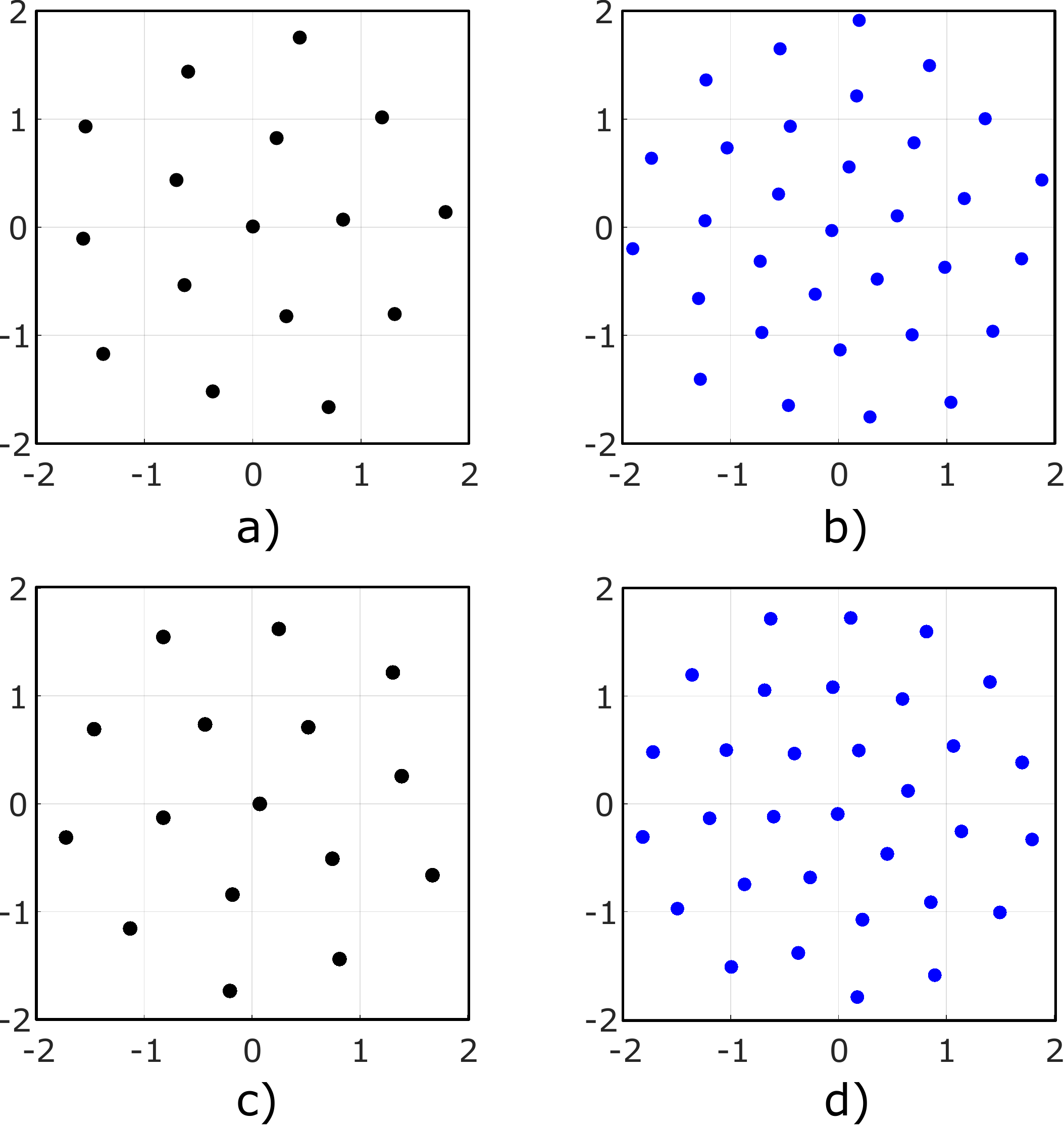}
	\caption{Constellation designed by the encoder during the end-to-end training with $\beta=0.2$ and parameters $(k,n)$: a) ($4,1$), b) ($5,1$), c) 2 dimensional t-SNE of AE($4,2$) and d)  2 dimensional t-SNE of AE($5,2$).}
	\label{fig:constellation}
\end{figure}

To assess the methodology even with higher dimension $M$ of the input alphabet, we illustrate the optimal constellation schemes when the number of possible messages is $M=16$ and $M=32$. Moreover, two cases are studied, when we transmit one complex symbol ($n=1$) and two dependent complex symbols ($n=2$) over the channel. Fig.\ref{fig:constellation}a and Fig.\ref{fig:constellation}b show the learned pentagonal and hexagonal grid constellations when only one symbol is transmitted for an alphabet dimension of $M=16$ and $M=32$, respectively. Fig.\ref{fig:constellation}c and Fig.\ref{fig:constellation}d show, instead, an optimal projection of the coded signals in a 2D space through the learned two-dimensional t-distributed stochastic neighbor embedding
(t-SNE) \cite{vandermaaten2008visualizing}. We notice that the two pairs of constellations are similar, and therefore, even for larger code-lengths, the mutual information pushes the autoencoder to learn the optimal signal constellation.

\subsection{Capacity-approaching codes over different channels}
The mutual information block inside the autoencoder can be exploited to design capacity-approaching codes, as discussed in Sec.\ref{sec:capacity}. To show the potentiality of the method, we analyze the achieved rate, e.g. the mutual information, in three different scenarios. The first one considers the transmission over an AWGN channel, for which we know the exact closed form capacity. The second and third ones, instead, consider the transmission over an additive uniform noise and over a Rayleigh fading channel, for which we do not know the capacity in closed form. However, we expect the estimated mutual information to be a tight lower bound for the real channel capacity, especially at low SNRs.

\subsubsection{AWGN channel}
It is well known that for a discrete memory-less channel with input-output relation given by (assuming complex signals)
\begin{equation}
\label{eq:additive_noise}
Y_i=X_i+N_i,
\end{equation}
where the noise samples $N_i\sim \mathcal{CN}(0,\sigma^2)$ are i.i.d. and independent of $X_i$, and with a power constraint on the input signal $\mathbb{E}[|X_i|^2] \leq P$, the channel capacity is achieved by $X_i\sim \mathcal{CN}(0,P)$ and is equal to
\begin{equation}
C = \log_2(1+\text{SNR}) \; \; \text{[bits/ channel use]}.
\end{equation}

\begin{figure}
	\centering
	\includegraphics[scale=0.34]{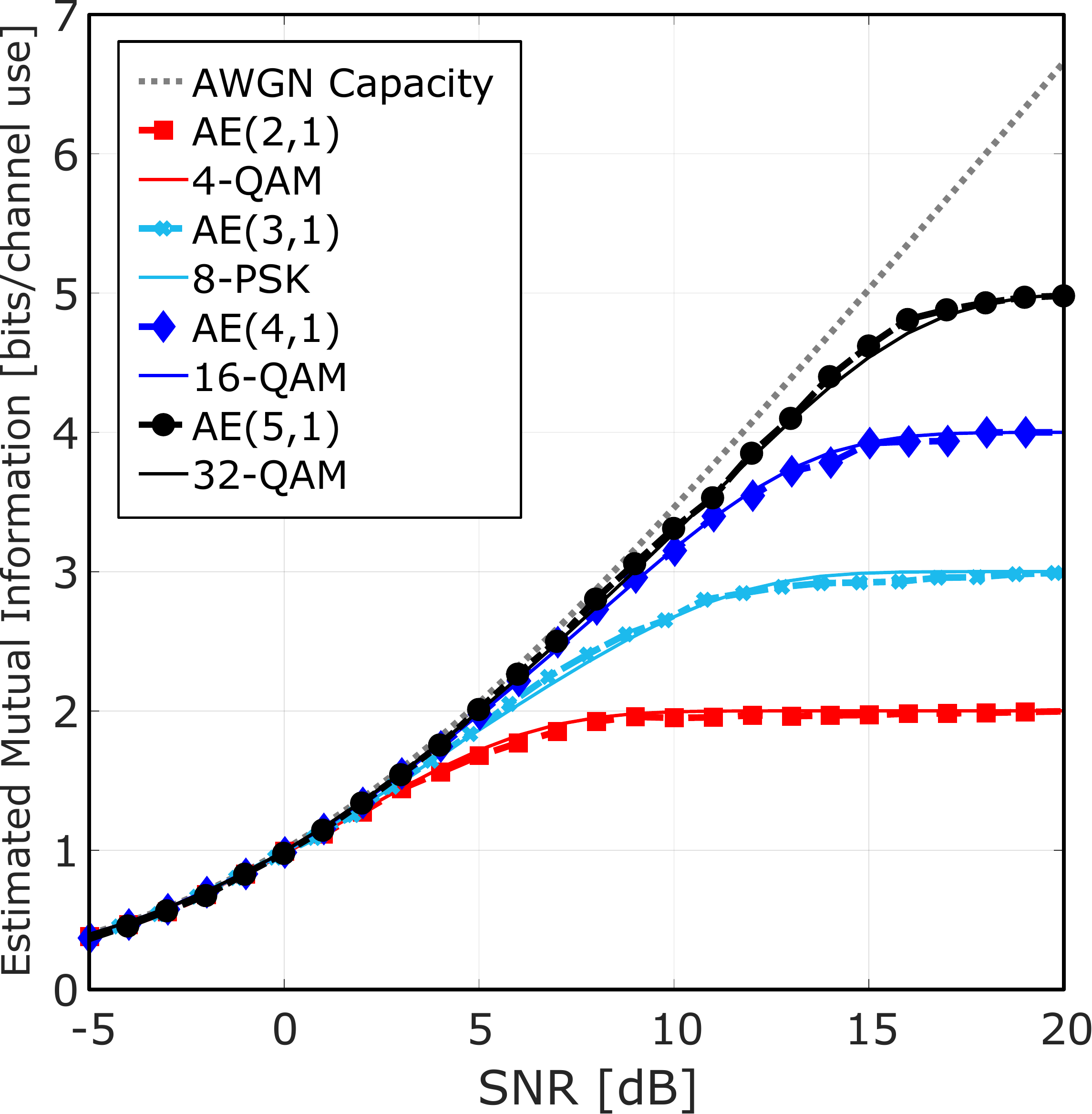}
	\caption{Estimated mutual information achieved with $\beta=0.2$ for different dimension of the alphabet $M$ when only one symbol ($n=1$) is transmitted over an AWGN channel.}
	\label{fig:gaussian_capacity}
\end{figure}
The rate-approaching autoencoder attempts to maximize the mutual information during the training progress by modifying, at each iteration, the input distribution $p_X(x)$. Thus, given the input parameters, it produces optimal codes for which the estimation of the mutual information is already provided by MINE. Fig.\ref{fig:gaussian_capacity} illustrates the estimated mutual information when $\beta=0.2$ for different values of the alphabet cardinality $M$. A comparison is finally made with established $M$-QAM schemes. We remark that for discrete-input signals with distribution $p_X(x)$, the mutual information is given by
\begin{equation}
I(X;Y) = \sum_x p_X(x)\cdot \mathbb{E}_{p_Y(y|x)}\biggl[\log \frac{p_Y(y|x)}{p_Y(y)}\biggr],
\end{equation}

and in particular with uniformly distributed symbols (only
geometric shaping), $p_X(x)=1/M$. It is found that the autoencoder constructively provides a good estimate of mutual information. In addition, for the case $M=32$, the conventional QAM signal constellation is not optimal, since the autoencoder AE($5,1$) performs geometric signal shaping and finds a constellation that can offer higher information rate as it is visible in the range $13-16$ dB.  

\subsubsection{Additive uniform noise channel}
\begin{figure}
	\centering
	\includegraphics[scale=0.34]{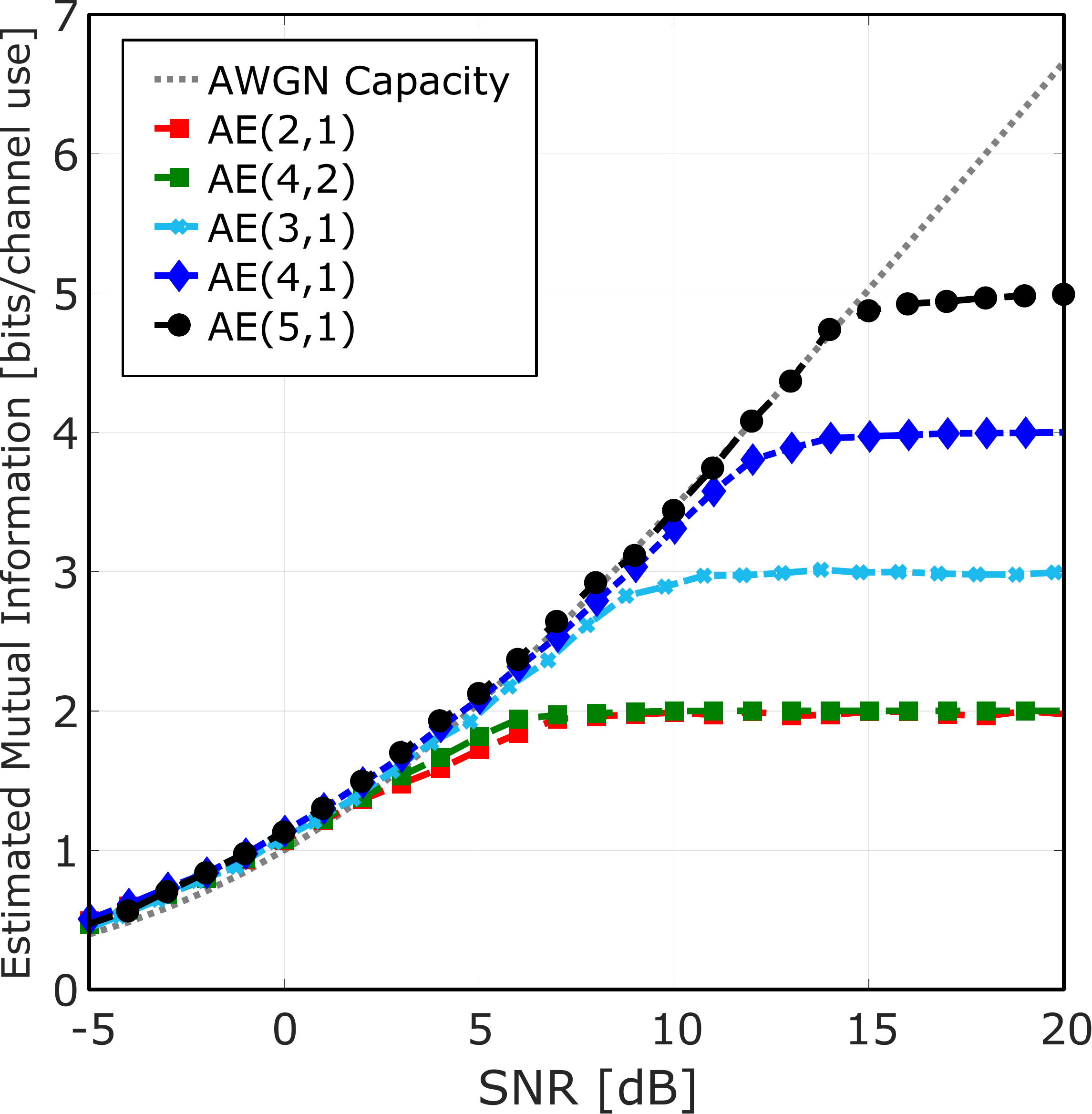}
	\caption{Estimated mutual information achieved with $\beta=0.2$ for different dimension of the alphabet $M$ when one and two symbols ($n=1,2$) are transmitted over an additive uniform noise channel.}
	\label{fig:additive_capacity}
\end{figure}

\begin{figure}
	\centering
	\includegraphics[scale=0.20]{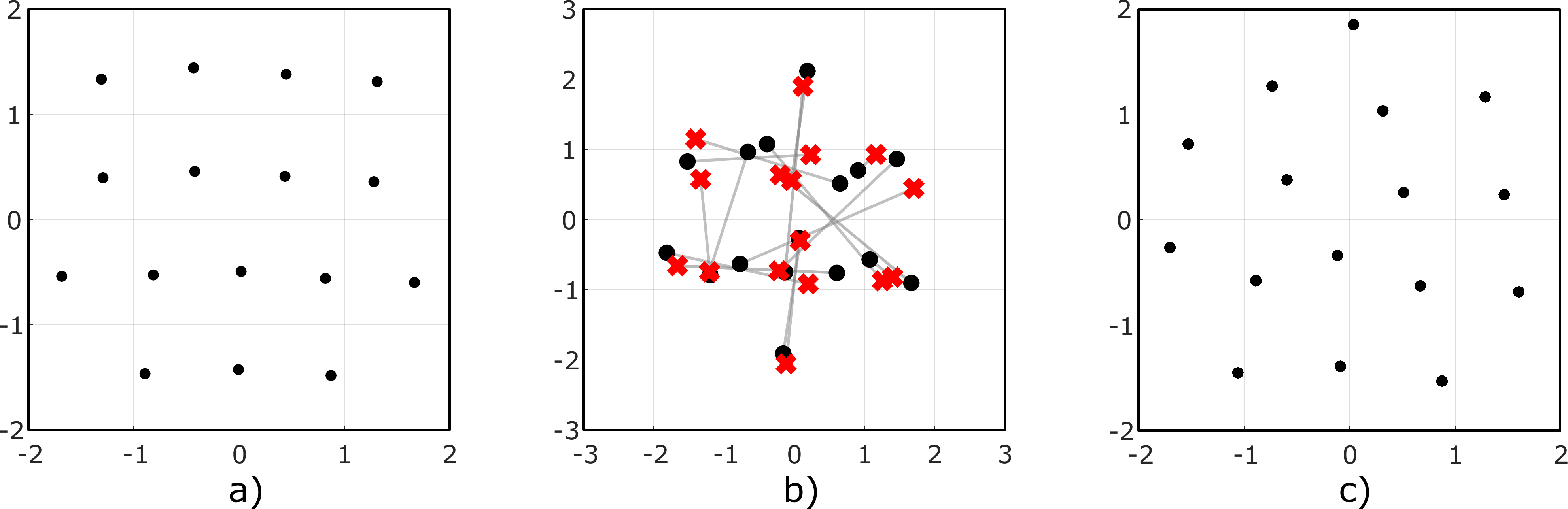}
	\caption{Constellation designed by the encoder during the end-to-end training with $\beta=0.2$, uniform noise layer and parameters $(k,n)$: a) ($4,1$), b) pairs of transmitted coded symbols AE($4,2$), c) 2 dimensional t-SNE of AE($4,2$).}
	\label{fig:constellation_unif}
\end{figure}

No closed form capacity expression is known when the noise $N$ has uniform density $N\sim \mathcal{U}(-\frac{\Delta}{2},\frac{\Delta}{2})$ under an average power constraint. However, Shannon proved that the AWGN capacity is the lowest among all additive noise channels of the form \eqref{eq:additive_noise}. Consistently, it is rather interesting to notice that the estimated mutual information for the uniform channel is higher than the AWGN capacity for low SNRs until it saturates to the coding rate $R$, as depicted in Fig.\ref{fig:additive_capacity}. 
Moreover, differently from the AWGN coding signal set, here we also consider the complex signals generated by the encoder over $2$ channel uses. As expected, the mutual information achieved by the code produced with the autoencoder AE($4,2$) is higher than the AE($2,1$) one, consistently with the idea that $n>1$ introduces a temporal dependence in the code to improve the decoding phase. In addition, Fig.\ref{fig:constellation_unif}a illustrates the constellation produced by the rate-approaching autoencoders AE($4,1$) in the uniform noise case, Fig.\ref{fig:constellation_unif}b shows how the transmitted coded symbols (transmitted complex coefficients) vary for different channel uses while Fig.\ref{fig:constellation_unif}c, instead, displays the learned two-dimensional t-SNE constellation of the code produced by the AE($4,2$). 

\subsubsection{Rayleigh channel}
\begin{figure}
	\centering
	\includegraphics[scale=0.34]{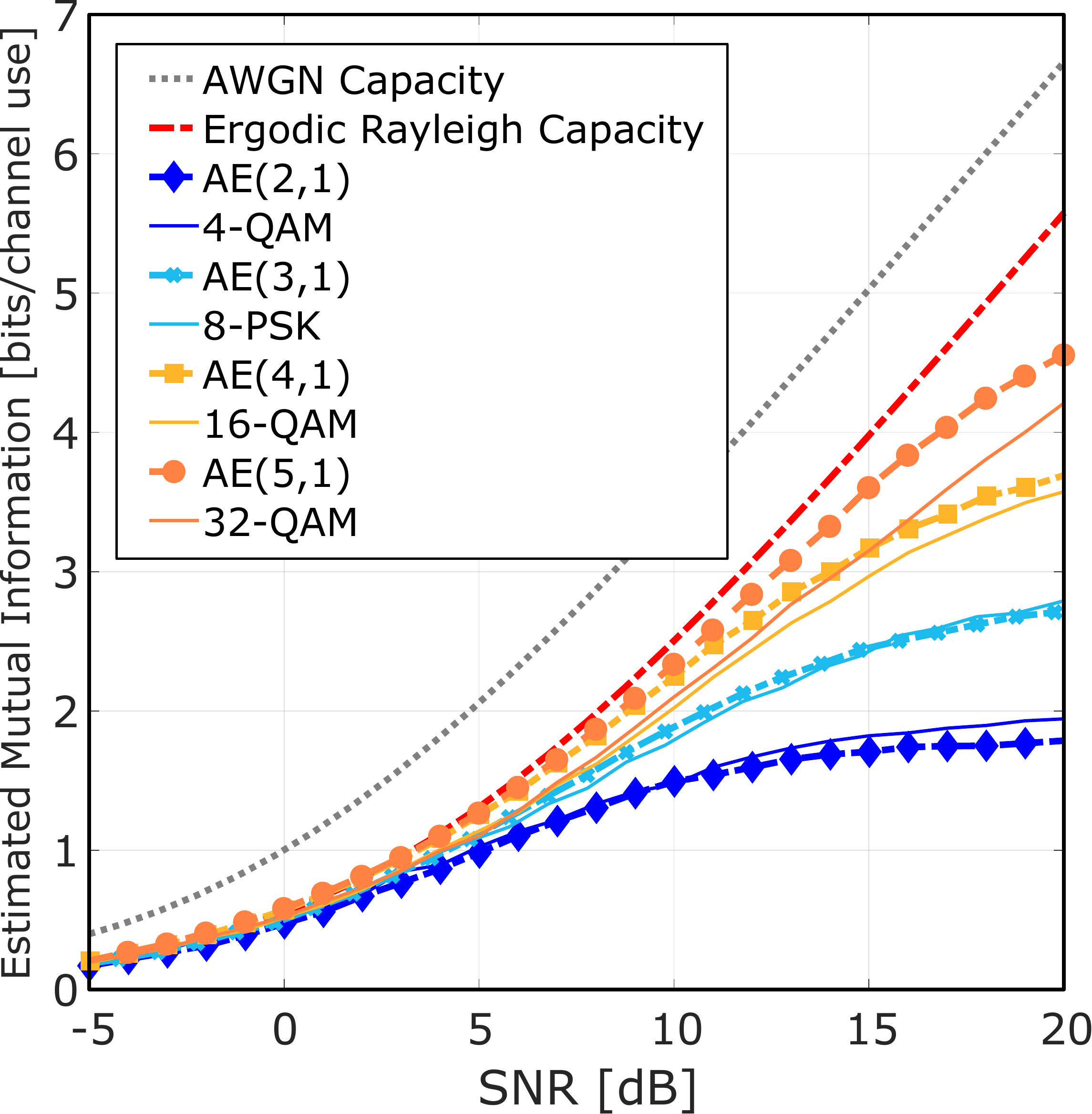}
	\caption{Estimated mutual information achieved with $\beta=0.2$ for different dimension of the alphabet $M$ when one symbol ($n=1$) is transmitted over a Rayleigh channel.}
	\label{fig:rayleigh_capacity}
\end{figure}
As final experiment, we introduce fading in the communication channel, in particular we consider a Rayleigh fading channel of the form
\begin{equation}
\label{eq:rayleigh_noise}
Y_i=h_iX_i+N_i,
\end{equation}
where $N_i\sim \mathcal{CN}(0,\sigma^2)$ and $h_i$ is a realization of a random variable whose amplitude $\alpha$ belongs to the Rayleigh distribution $p_R(r)$ and is independent of the signal and noise. The ergodic capacity is given by
\begin{equation}
C = \mathbb{E}_{\alpha\sim p_R(r)}\biggl[\log_2(1+\alpha ^2 \cdot \text{SNR})\biggr].
\end{equation}
Fig.\ref{fig:rayleigh_capacity} shows the estimated mutual information between the transmitted and received signals over a Rayleigh channel with several alphabet dimensions $M$ and compares it with the conventional $M$-QAM schemes. In all the cases, the estimated  mutual information is upper bounded by the ergodic Rayleigh capacity. Similarly to the uniform case, it is curious to notice that the achieved information rate with the rate-approaching autoencoder is in some cases higher than the one obtained with the $M$-QAM schemes.
In particular, with $M=32$ the AE($5,1$) exceeds by $0.5$ bit/channel uses at SNR$=15$ dB the $32$-QAM scheme. 

\section{Conclusions}
\label{sec:conclusions}
This paper has firstly discussed the autoencoder-based communication system, highlighting the limits of the current cross-entropy loss function used for the training process. A regularization term that accounts for the mutual information between the transmitted and the received signals has been introduced to design optimal coding-decoding schemes for fixed rate $R$, code-length $n$ and given power constraint. The rationale behind the mutual information choice has been motivated exploiting the information bottleneck principle and the fundamental concept of channel capacity. In addition, an adaptation of the coding rate $R$ allowed us to build a capacity learning algorithm enabled by the novel loss function in a scheme named capacity-approaching autoencoder. Remarkably, the presented methodology does not make use of any theoretical a-priori knowledge of the communication channel and therefore opens the door to several future studies on intractable channel models, an example of which is the power line communication channel.

\bibliographystyle{IEEEtran}
\bibliography{IEEEabrv,biblio}

\appendix
\section{Appendix}
 \setcounter{lemma}{0}

The proof of Lemma 1 follows some steps provided in \cite{Hoydis2019}. 
\begin{lemma}\emph{(See \cite{Hoydis2019})}
\label{lemma:Lemma1}
Let $s \in \mathcal{M}$ be the transmitted message and let $(x,y)$ be samples drawn from the joint distribution $p_{XY}(x,y)$. If $x = f(s;\theta_E)$ is an invertible function representing the encoder and if $p_{\hat{S}|Y}(\hat{s}|y;\theta_D) = g(y;\theta_D) $ is the decoder block, then the cross-entropy function $\mathcal{L}(\theta_E,\theta_D)$ admits the following decomposition
\begin{align}
\mathcal{L}(\theta_E,\theta_D) &= H(S)-I_{\theta_E}(X;Y)+ \nonumber \\ 
&+\mathbb{E}_{y\sim p_Y(y)}[D_{\text{KL}}(p_{X|Y}(x|y)||p_{\hat{S}|Y}(x|y;\theta_D))]. \nonumber
\end{align}
\end{lemma}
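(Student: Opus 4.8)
The plan is to start from the operational meaning of the categorical cross-entropy and peel off the two claimed terms by a standard add-and-subtract trick. Written explicitly, the loss in \eqref{eq:CE} is the negative log of the decoder's softmax output evaluated at the \emph{true} message, i.e. $\mathcal{L}(\theta_E,\theta_D)=\mathbb{E}_{(s,y)\sim p_{SY}}[-\log p_{\hat{S}|Y}(s|y;\theta_D)]$. Because $f(\cdot;\theta_E)$ is invertible, it is a bijection between $\mathcal{M}$ and the finite constellation $\{f(s;\theta_E):s\in\mathcal{M}\}$, so I can relabel each message $s$ by its symbol $x=f(s;\theta_E)$ throughout: $p_{SY}$ and $p_{XY}$ then carry the same information, and $p_{\hat{S}|Y}(s|y;\theta_D)$ may be written $p_{\hat{S}|Y}(x|y;\theta_D)$ exactly as in the statement. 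This recasts the loss as $\mathcal{L}(\theta_E,\theta_D)=\mathbb{E}_{(x,y)\sim p_{XY}(x,y)}[-\log p_{\hat{S}|Y}(x|y;\theta_D)]$.

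Next I would introduce the true posterior $p_{X|Y}(x|y)$ by adding and subtracting $\log p_{X|Y}(x|y)$ inside the expectation, splitting $\mathcal{L}$ into two pieces:
\[
\mathcal{L}(\theta_E,\theta_D)=\mathbb{E}_{(x,y)\sim p_{XY}}\left[\log\frac{p_{X|Y}(x|y)}{p_{\hat{S}|Y}(x|y;\theta_D)}\right]+\mathbb{E}_{(x,y)\sim p_{XY}}\left[-\log p_{X|Y}(x|y)\right].
\]
Factoring the expectation in the first term as $\mathbb{E}_{y\sim p_Y(y)}\mathbb{E}_{x\sim p_{X|Y}(\cdot|y)}$ identifies it as $\mathbb{E}_{y\sim p_Y(y)}[D_{\text{KL}}(p_{X|Y}(x|y)\|p_{\hat{S}|Y}(x|y;\theta_D))]$, which is precisely the last term of the claimed decomposition. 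The second expectation is, by definition, the conditional entropy $H(X|Y)$.

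It then remains to rewrite $H(X|Y)$. By the chain rule for entropy, $H(X|Y)=H(X)-I(X;Y)$; since $X=f(S;\theta_E)$ with $f$ invertible, $H(X)=H(S)$; and since the joint law $p_{XY}$ (hence the mutual information) depends on the encoder parameters, I write $I(X;Y)=I_{\theta_E}(X;Y)$. Substituting yields
\[
\mathcal{L}(\theta_E,\theta_D)=H(S)-I_{\theta_E}(X;Y)+\mathbb{E}_{y\sim p_Y(y)}[D_{\text{KL}}(p_{X|Y}(x|y)\|p_{\hat{S}|Y}(x|y;\theta_D))],
\]
which is the asserted identity.

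The only real obstacle is bookkeeping about the alphabet: although $X$ formally lives on the continuous space $\mathbb{C}^n$, it is a deterministic image of the finite message set and is therefore supported on $M$ points, so $p_{X|Y}$, $p_{\hat{S}|Y}(\cdot|y;\theta_D)$ and all entropies must be read as discrete objects (sums, not integrals). With that convention the KL term is a genuine nonnegative divergence and $H(X)=H(S)$ holds with no differential-entropy correction; once it is fixed, every step above is an exact identity and the remainder of the argument is routine.
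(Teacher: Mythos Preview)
Your proposal is correct and follows essentially the same add-and-subtract route as the paper: the paper pivots on $\log p_X(x)$ first (extracting $H(S)$ and then splitting the remainder into $-I(X;Y)$ and the KL term), whereas you pivot on $\log p_{X|Y}(x|y)$ first (extracting the KL term and $H(X|Y)$, then invoking $H(X|Y)=H(X)-I(X;Y)=H(S)-I_{\theta_E}(X;Y)$). The two orderings are algebraically equivalent, and your remark about reading everything over the finite support of $X$ is exactly the bookkeeping the paper's invertibility hypothesis is there to justify.
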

\begin{proof}
The cross-entropy loss function can be rewritten as follows
\begin{align*}
& \mathcal{L}(\theta_E,\theta_D) = \mathbb{E}_{(x,y)\sim p_{XY}(x,y)}[-\log(p_{\hat{S}|Y}(\hat{s}|y;\theta_D))] \\ \nonumber
&= -\sum_{x,y}{p_{XY}(x,y)\log(p_{\hat{S}|Y}(\hat{s}|y;\theta_D))} \\ \nonumber
&= -\sum_{x,y}{p_{XY}(x,y)\log(p_X(x))} + \\
&+ \sum_{x,y}{p_{XY}(x,y)\log\biggl(\frac{p_X(x)}{p_{\hat{S}|Y}(\hat{s}|y;\theta_D)}\biggr)} \nonumber
\end{align*}
Using the encoder hypothesis, the first term in the last expression corresponds to the source entropy $H(S)$. Therefore,
\begin{align*}
&\mathcal{L}(\theta_E,\theta_D) = H(S) + \sum_{x,y}{p_{XY}(x,y)\log\biggl(\frac{p_X(x)\cdot p_Y(y)}{p_{XY}(x,y)}\biggr)}+\\ \nonumber
&+ \sum_{x,y}{p_{XY}(x,y)\log\biggl(\frac{p_{XY}(x,y)}{p_{\hat{S}|Y}(\hat{s}|y;\theta_D)\cdot p_Y(y)}\biggr)} \\ \nonumber
&= H(S)-I(X;Y)+\sum_{x,y}{p_{XY}(x,y)\log\biggl(\frac{p_{X|Y}(x|y)}{p_{\hat{S}|Y}(\hat{s}|y;\theta_D)}\biggr)} \\ \nonumber
&= H(S)-I_{\theta_E}(X;Y)+\mathbb{E}_{y}[D_{\text{KL}}(p_{X|Y}(x|y)|| p_{\hat{S}|Y}(x|y;\theta_D))] \qedhere \nonumber
\end{align*} 
\end{proof}


\end{document}